\newcommand{\etal}{\textit{et al.}}
\newcommand{\bigO}{\mathcal{O}}
\algrenewcommand\algorithmicrequire{\textbf{Input:}}
\algrenewcommand\algorithmicensure{\textbf{Output:}}
\begin{document}
\titlerunning{Time-Bucketed Balance Records}
\title{Time-Bucketed Balance Records: Bounded-Storage Ephemeral Tokens for Resource-Constrained Systems}

\author{Shaun Scovil\inst{1} \and Bhargav Chickmagalur Nanjundappa\inst{1}}

\institute{
Radius Technology Systems, Cambridge, USA
}

\maketitle
\markboth{Time-Bucketed Balance Records}{Time-Bucketed Balance Records}

\begin{abstract}
Fungible tokens with time-to-live (TTL) semantics require tracking individual expiration times for each deposited unit.
A naive implementation creates a new balance record per deposit, leading to unbounded storage growth and vulnerability to denial-of-service attacks.
We present \emph{time-bucketed balance records}, a data structure that bounds storage to $\bigO(k)$ records per account while guaranteeing that tokens never expire before their configured TTL.
Our approach discretizes time into $k$ buckets, coalescing deposits within the same bucket to limit unique expiration timestamps.
We prove three key properties: (1) storage is bounded by $k+1$ records regardless of deposit frequency, (2) actual expiration time is always at least the configured TTL, and (3) adversaries cannot increase a victim's operation cost beyond $\bigO(k^2)$ worst case ($\bigO(k)$ amortized).
We provide a reference implementation in Solidity with measured gas costs demonstrating practical efficiency.

\keywords{Ephemeral tokens \and Bounded storage \and Smart contracts \and Time-to-live \and DoS resistance}
\end{abstract}

\section{Introduction}
\label{sec:introduction}

Many decentralized applications require fungible tokens with expiration semantics.
API access credits, subscription tokens, and time-limited authorization grants all share a common requirement: tokens should become invalid after a configured time-to-live (TTL) period.
Unlike non-fungible tokens~\cite{erc721} where expiration can be tracked per-token~\cite{erc4907,erc7858}, fungible tokens~\cite{erc20} present a unique challenge---each deposited unit may have a different expiration time, yet the tokens remain interchangeable for spending purposes.

A straightforward implementation maintains a list of balance records, each containing an amount and expiration timestamp.
Every deposit appends a new record, and withdrawals consume from the oldest records first (FIFO semantics).
However, this approach suffers from two critical problems:

\begin{enumerate}
    \item \textbf{Unbounded storage growth:} An account receiving frequent small deposits accumulates arbitrarily many records, increasing storage costs linearly with deposit count.
    \item \textbf{Denial-of-service vulnerability:} An adversary can deliberately create many small deposits to a victim's account, causing the victim's subsequent operations to iterate over numerous records and potentially exceed gas limits~\cite{madmax,attacks}.
\end{enumerate}

We present \emph{time-bucketed balance records}, a data structure that addresses both problems while preserving essential TTL guarantees.
Our key insight is that expiration times can be discretized into $k$ buckets without violating the minimum TTL requirement---we simply round expiration times \emph{up} to the next bucket boundary.
This bounds the number of distinct expiration timestamps to at most $k$, enabling record coalescing and guaranteeing $\bigO(k)$ storage per account.

Our contributions are:

\begin{enumerate}
    \item \textbf{Problem formalization:} We define requirements for bounded-storage TTL tracking, including storage bounds, TTL guarantees, FIFO semantics, and expiration-preserving transfers (Section~\ref{sec:problem}).

    \item \textbf{Algorithm design:} We present the time-bucketing mechanism with four core operations: insert, consume, transfer, and prune (Section~\ref{sec:algorithm}).

    \item \textbf{Formal analysis:} We prove storage bounds, TTL guarantees, and DoS resistance properties with complexity analysis (Section~\ref{sec:analysis}).

    \item \textbf{Empirical validation:} We present gas benchmarks demonstrating practical efficiency on Ethereum (Section~\ref{sec:implementation}).
\end{enumerate}

\section{Problem Formulation}
\label{sec:problem}

\subsection{System Model}

We consider resource-constrained execution environments where storage and computation have explicit costs.
The Ethereum Virtual Machine (EVM)~\cite{yellowpaper} serves as our motivating example, where each 256-bit storage slot costs 20,000 gas to initialize and operations are bounded by block gas limits.
However, our approach generalizes to any system with:

\begin{itemize}
    \item \textbf{Per-slot storage costs:} Writing to storage incurs cost proportional to the number of slots used.
    \item \textbf{Per-operation computation costs:} Iterating over data structures incurs cost proportional to the iteration count.
    \item \textbf{Operation cost limits:} Individual operations must complete within a cost budget (e.g., block gas limit).
\end{itemize}

\subsection{Requirements}

Let $\mathcal{B}$ denote a balance record data structure for an account-resource pair.
We define four requirements:

\begin{description}
    \item[R1: Bounded Storage.] The number of records $|\mathcal{B}| \leq k + 1$ for a configurable constant $k$, regardless of the number of deposits.

    \item[R2: TTL Guarantee.] For any deposit at time $t$ with configured TTL $T$, the deposited amount must remain valid until at least time $t + T$. Formally:
    \[
    \forall \text{ deposit at time } t: \text{expiresAt} \geq t + T
    \]

    \item[R3: FIFO Consumption.] When withdrawing or transferring tokens, the oldest (earliest-expiring) valid tokens are consumed first.

    \item[R4: Expiration-Preserving Transfers.] When tokens are transferred between accounts, their original expiration times are preserved. The recipient receives tokens that expire at the same time they would have expired for the sender.
\end{description}

\subsection{Threat Model}

We consider an adversary with the following capabilities:

\begin{itemize}
    \item \textbf{Deposit capability:} The adversary can make arbitrarily many deposits of arbitrary (possibly minimal) amounts to any account, including a victim's account.
    \item \textbf{Timing control:} The adversary can choose when to make deposits, potentially spreading them across different time buckets.
\end{itemize}

The adversary's goals are:

\begin{itemize}
    \item \textbf{Storage exhaustion:} Force unbounded growth in the victim's balance records, increasing their storage costs.
    \item \textbf{Operation DoS:} Increase the victim's operation costs (gas) to the point where operations fail or become prohibitively expensive.
    \item \textbf{TTL violation:} Cause tokens to expire before their configured TTL.
\end{itemize}

A secure solution must ensure that none of these goals are achievable, regardless of the adversary's deposit patterns.

\subsection{Design Space}

Before presenting our solution, we briefly survey the design space:

\begin{enumerate}
    \item \textbf{Unbounded array:} Store all records; violates R1 and enables DoS attacks.
    \item \textbf{Single timestamp:} Track only one expiration per account; violates R4 (cannot preserve different expirations) and R2 (resetting timestamp on deposit shortens existing tokens' TTL).
    \item \textbf{Circular buffer:} Fixed-size buffer that evicts the oldest record when full; violates R2 (newly deposited tokens can be evicted before their TTL expires, as eviction is based on insertion order rather than expiration time).
    \item \textbf{Time bucketing (ours):} Discretize time into $k$ buckets; satisfies all requirements.
\end{enumerate}

Only time bucketing achieves all four requirements simultaneously.

\section{Algorithm Design}
\label{sec:algorithm}

\subsection{Time-Bucketing Mechanism}

The key insight is to discretize continuous expiration times into $k$ discrete bucket boundaries.
Given a TTL of $T$ seconds and a target of at most $k$ balance records per account, we define the bucket width:

\begin{equation}
w = \left\lceil \frac{T}{k} \right\rceil
\label{eq:bucket-width}
\end{equation}

For a deposit at time $t$, the exact expiration would be $t + T$.
We compute the \emph{bucketed expiration} by rounding up to the next bucket boundary:

\begin{equation}
\textsc{BucketedExpiry}(t, T, w) = \left\lceil \frac{t + T}{w} \right\rceil \times w
\label{eq:bucketed-expiry}
\end{equation}

The ceiling operation in Equation~\ref{eq:bucketed-expiry} \emph{never decreases} the expiration time, ensuring tokens never expire before their configured TTL (satisfying requirement R2 from Section~\ref{sec:problem}).
The additional lifetime granted by rounding is at most $w - 1$ seconds---a configurable trade-off between expiration precision and storage efficiency.
Figure~\ref{fig:bucketing} illustrates this mechanism.


\begin{figure}[t]
\centering
\begin{tikzpicture}[
    >=stealth,
    bucket/.style={draw, minimum height=0.6cm, minimum width=2.2cm, fill=gray!10},
    deposit/.style={circle, fill=black, inner sep=2pt},
    expiry/.style={circle, draw, inner sep=2pt},
    bucketed/.style={diamond, fill=black, inner sep=2pt},
    arrow/.style={->, thick},
    dashedarrow/.style={->, dashed, gray}
]

\node[anchor=west, font=\footnotesize\bfseries] at (-0.5, 3.8) {(a) Time divided into $k$ buckets};

\draw[thick, ->] (0, 2.5) -- (10, 2.5) node[right] {\footnotesize time};

\foreach \i/\label in {0/1, 1/2, 2/3, 3/4} {
    \node[bucket] at (1.1 + \i*2.2, 2.5) {\footnotesize Bucket \label};
}

\foreach \i/\label in {0/$b_1$, 1/$b_2$, 2/$b_3$, 3/$b_4$, 4/$b_5$} {
    \draw[thick] (\i*2.2, 2.2) -- (\i*2.2, 2.8);
    \node[below, font=\footnotesize] at (\i*2.2, 2.15) {\label};
}

\draw[<->] (0, 3.0) -- (8.8, 3.0);
\node[above, font=\footnotesize] at (4.4, 3.0) {TTL = $T$};

\draw[<->] (0, 1.7) -- (2.2, 1.7);
\node[below, font=\footnotesize] at (1.1, 1.75) {$w = \lceil T/k \rceil$};

\node[anchor=west, font=\footnotesize\bfseries] at (-0.5, 0.9) {(b) Deposits mapped to bucket boundaries};

\draw[thick, ->] (0, 0) -- (10, 0) node[right] {\footnotesize time};

\foreach \i in {0, 1, 2, 3, 4} {
    \draw[gray, dashed] (\i*2.2, -0.3) -- (\i*2.2, 0.3);
}

\node[deposit, label={above:\footnotesize $D_1$}] (d1) at (0.8, 0) {};
\node[deposit, label={above:\footnotesize $D_2$}] (d2) at (3.5, 0) {};
\node[deposit, label={above:\footnotesize $D_3$}] (d3) at (6.0, 0) {};

\node[expiry, label={below:\footnotesize exact}] (e1) at (3.0, -1.2) {};
\node[expiry, label={below:\footnotesize exact}] (e2) at (5.7, -1.2) {};
\node[expiry, label={below:\footnotesize exact}] (e3) at (8.2, -1.2) {};

\draw[dashedarrow] (d1) -- node[right, font=\tiny, pos=0.5] {$+T$} (e1);
\draw[dashedarrow] (d2) -- node[right, font=\tiny, pos=0.5] {$+T$} (e2);
\draw[dashedarrow] (d3) -- node[right, font=\tiny, pos=0.5] {$+T$} (e3);

\node[bucketed] (b1) at (4.4, -1.2) {};
\node[bucketed] (b2) at (6.6, -1.2) {};
\node[bucketed] (b3) at (8.8, -1.2) {};

\draw[arrow, blue] (e1) -- node[above, font=\tiny] {$\lceil\cdot\rceil$} (b1);
\draw[arrow, blue] (e2) -- node[above, font=\tiny] {$\lceil\cdot\rceil$} (b2);
\draw[arrow, blue] (e3) -- node[above, font=\tiny] {$\lceil\cdot\rceil$} (b3);

\node[below, font=\footnotesize] at (4.4, -1.5) {$b_3$};
\node[below, font=\footnotesize] at (6.6, -1.5) {$b_4$};
\node[below, font=\footnotesize] at (8.8, -1.5) {$b_5$};

\node[anchor=west, font=\footnotesize\bfseries] at (-0.5, -2.5) {(c) Balance records after coalescing};

\draw (0, -3.0) rectangle (4.5, -4.5);
\draw (0, -3.5) -- (4.5, -3.5);
\draw (2.0, -3.0) -- (2.0, -4.5);

\node[font=\footnotesize\bfseries] at (1.0, -3.25) {amount};
\node[font=\footnotesize\bfseries] at (3.25, -3.25) {expiresAt};

\draw (0, -4.0) -- (4.5, -4.0);
\node[font=\footnotesize] at (1.0, -3.75) {$a_1 + a_2$};
\node[font=\footnotesize] at (3.25, -3.75) {$b_4$};

\node[font=\footnotesize] at (1.0, -4.25) {$a_3$};
\node[font=\footnotesize] at (3.25, -4.25) {$b_5$};

\node[anchor=west, font=\scriptsize, text=blue] at (-0.2, -0.95) {$D_1, D_2$ coalesced};
\node[anchor=west, font=\scriptsize, gray] at (-0.2, -1.25) {(same bucket)};

\node[anchor=west, font=\footnotesize] at (6.5, -2.8) {\textbf{Legend:}};
\node[deposit, label={right:\footnotesize deposit}] at (6.7, -3.2) {};
\node[expiry, label={right:\footnotesize exact expiry}] at (6.7, -3.6) {};
\node[bucketed, label={right:\footnotesize bucketed expiry}] at (6.7, -4.0) {};

\end{tikzpicture}
\caption{Time-bucketing mechanism. (a) The TTL period $T$ is divided into $k$ buckets of width $w = \lceil T/k \rceil$. (b) Each deposit $D_i$ is assigned a bucketed expiration by rounding $t_i + T$ up to the next bucket boundary via Equation~\ref{eq:bucketed-expiry}. (c) Deposits with identical bucketed expirations coalesce into a single balance record, bounding storage to at most $k$ records per account.}
\label{fig:bucketing}
\end{figure}

\subsection{Data Structure}

We maintain a sorted array of balance records per account:

\begin{equation}
\mathcal{B} = [(a_1, e_1), (a_2, e_2), \ldots, (a_n, e_n)]
\label{eq:balance-records}
\end{equation}

\noindent where $a_i > 0$ is the token amount and $e_i$ is the bucketed expiration timestamp, ordered such that $e_1 < e_2 < \ldots < e_n$.

\textbf{Invariant.} All expiration timestamps in $\mathcal{B}$ are \emph{distinct} bucket boundaries.
The active expiration window spans from the earliest non-expired bucket to the latest possible new deposit expiration, containing at most $k+1$ distinct bucket boundaries, ensuring $|\mathcal{B}| \leq k+1$ (satisfying requirement R1).

\subsection{Operations}

We define four core operations.
Algorithm~\ref{alg:operations} presents the pseudocode for \textsc{Insert}, \textsc{Consume}, \textsc{Transfer}, and \textsc{Prune}.

\begin{algorithm}[htbp]
\caption{Time-Bucketed Balance Record Operations}
\label{alg:operations}
\small
\begin{algorithmic}[1]
\Require Balance records $\mathcal{B}$, bucket width $w$, current time $t_{\text{now}}$

\Function{BucketedExpiry}{$t$, $T$, $w$}
    \State \Return $\lceil (t + T) / w \rceil \times w$
\EndFunction

\Function{Insert}{$\mathcal{B}$, $a$, $e$, $t_{\text{now}}$}
    \State \Call{Prune}{$\mathcal{B}$, $t_{\text{now}}$} \Comment{Remove expired records}
    \For{$i \gets 0$ \textbf{to} $|\mathcal{B}| - 1$}
        \If{$\mathcal{B}[i].e = e$}
            $\mathcal{B}[i].a \gets \mathcal{B}[i].a + a$; \Return \Comment{Coalesce}
        \EndIf
        \If{$\mathcal{B}[i].e > e$}
            \textsc{ShiftInsert}$(\mathcal{B}, i, (a, e))$; \Return
        \EndIf
    \EndFor
    \State $\mathcal{B}.\text{append}((a, e))$
\EndFunction

\Function{Consume}{$\mathcal{B}$, $a$, $t_{\text{now}}$}
    \State $r \gets a$; $C \gets []$ \Comment{Remaining; collected pairs}
    \For{$i \gets 0$ \textbf{to} $|\mathcal{B}| - 1$}
        \If{$\mathcal{B}[i].e \leq t_{\text{now}}$} \textbf{continue} \EndIf \Comment{Skip expired}
        \State $\delta \gets \min(r, \mathcal{B}[i].a)$
        \State $\mathcal{B}[i].a \gets \mathcal{B}[i].a - \delta$; $C.\text{append}((\delta, \mathcal{B}[i].e))$
        \State $r \gets r - \delta$
        \If{$r = 0$} \Return $(\textsc{Success}, C)$ \EndIf
    \EndFor
    \State \Return $(\textsc{InsufficientBalance}, \emptyset)$
\EndFunction

\Function{Transfer}{$\mathcal{B}_s$, $\mathcal{B}_r$, $a$, $t_{\text{now}}$}
    \State $(status, C) \gets$ \Call{Consume}{$\mathcal{B}_s$, $a$, $t_{\text{now}}$}
    \If{$status \neq \textsc{Success}$} \Return $status$ \EndIf
    \For{$(a_i, e_i) \in C$}
        \Call{Insert}{$\mathcal{B}_r$, $a_i$, $e_i$, $t_{\text{now}}$} \Comment{Preserve expiration}
    \EndFor
    \State \Return \textsc{Success}
\EndFunction

\Function{Prune}{$\mathcal{B}$, $t_{\text{now}}$}
    \State $j \gets 0$
    \For{$i \gets 0$ \textbf{to} $|\mathcal{B}| - 1$}
        \If{$\mathcal{B}[i].e > t_{\text{now}}$ \textbf{and} $\mathcal{B}[i].a > 0$}
            $\mathcal{B}[j] \gets \mathcal{B}[i]$; $j \gets j + 1$
        \EndIf
    \EndFor
    \State $\mathcal{B}.\text{truncate}(j)$
\EndFunction

\end{algorithmic}
\end{algorithm}

\paragraph{Insert.}
To deposit amount $a$ with bucketed expiration $e$:
(1) search for an existing record with expiration $e$;
(2) if found, add $a$ to the existing record's amount (\emph{coalescing});
(3) otherwise, insert a new record $(a, e)$ maintaining sorted order.
Coalescing is the key mechanism that bounds storage: multiple deposits mapping to the same bucket boundary share a single record.

\paragraph{Consume.}
To withdraw amount $a$ (for burns or the sender side of transfers):
(1) iterate through records from oldest (earliest expiration) to newest;
(2) skip expired records where $e_i \leq t_{\text{now}}$;
(3) deduct from each valid record until $a$ is fully satisfied.
This implements FIFO semantics (requirement R3): oldest tokens are consumed first.

\paragraph{Transfer.}
To transfer amount $a$ from sender to recipient:
(1) consume from sender using FIFO, collecting $(amount, expiration)$ pairs rather than discarding them;
(2) insert each collected pair into the recipient's records using the \emph{original} expiration timestamp.
Crucially, expirations are \emph{not} re-bucketed on transfer---the recipient inherits the sender's expiration times exactly, satisfying requirement R4 (expiration-preserving transfers).

\paragraph{Prune.}
To remove stale records:
(1) iterate through all records;
(2) compact valid records (non-expired with $e_i > t_{\text{now}}$ and non-zero $a_i > 0$) to the front of the array;
(3) truncate the array.
Pruning is invoked automatically before insertions to reclaim storage from expired records.

\section{Formal Analysis}
\label{sec:analysis}

We prove three theorems establishing the key properties of time-bucketed balance records.

\subsection{Theorem 1: Storage Bound}

\begin{theorem}[Storage Bound]
\label{thm:storage}
For any sequence of operations on a balance record structure with parameter $k$ and bucket width $w = \lceil T/k \rceil$, the number of records satisfies $|\mathcal{B}| \leq k + 1$ at all times.
\end{theorem}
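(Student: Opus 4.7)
The plan is to characterize which expiration timestamps can appear in $\mathcal{B}$ at a given moment $t_{\text{now}}$ and then count how many distinct such values are possible. The structural observation driving the proof is that every stored expiration is an integer multiple of the bucket width $w$, since \textsc{BucketedExpiry} outputs only multiples of $w$ and no operation alters an expiration after storage; moreover, \textsc{Insert} coalesces whenever a matching expiration already exists, so the bucket boundaries stored in $\mathcal{B}$ are pairwise distinct. Hence $|\mathcal{B}|$ equals the number of distinct bucket boundaries present, and it suffices to bound the size of the admissible set.

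First I would establish that, at any moment $t_{\text{now}}$, every stored $e_i$ satisfies $t_{\text{now}} < e_i \leq \lceil (t_{\text{now}} + T)/w \rceil\, w$. The lower bound follows because \textsc{Prune}, invoked at the start of \textsc{Insert}, removes any record with $e_i \leq t_{\text{now}}$. The upper bound is proved by induction on the sequence of operations: a record's expiration is either freshly produced by \textsc{BucketedExpiry}$(t, T, w)$ for some deposit time $t \leq t_{\text{now}}$, giving $e_i = \lceil (t+T)/w \rceil\, w \leq \lceil (t_{\text{now}}+T)/w \rceil\, w$, or it is imported via \textsc{Transfer}, which reuses an expiration already present in the sender's $\mathcal{B}$ and therefore inherits the bound by the induction hypothesis. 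The \textsc{Consume} and \textsc{Prune} operations only shrink or remove records and so preserve the invariants trivially.

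Second I would count. Each admissible $e_i$ is a multiple of $w$ in the half-open interval $(t_{\text{now}},\, \lceil (t_{\text{now}}+T)/w \rceil\, w]$, so the number of candidates is
\[
\left\lceil \frac{t_{\text{now}} + T}{w} \right\rceil - \left\lfloor \frac{t_{\text{now}}}{w} \right\rfloor \;\leq\; \left\lceil \frac{T}{w} \right\rceil + 1 \;\leq\; k + 1,
\]
where the first inequality uses $\lceil a+b \rceil \leq \lceil a \rceil + \lceil b \rceil$ together with $\lceil a \rceil - \lfloor a \rfloor \leq 1$, and the second uses $w = \lceil T/k \rceil \geq T/k$, whence $\lceil T/w \rceil \leq k$. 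Combined with distinctness, this yields $|\mathcal{B}| \leq k+1$.

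The step I expect to be most subtle is the \textsc{Transfer} case of the inductive argument, because the recipient acquires records whose expirations were computed against a potentially earlier deposit time and are deliberately not re-bucketed on arrival. The resolution is that the upper bound on $e_i$ depends monotonically on the originating deposit time $t$, which is in any case $\leq t_{\text{now}}$, so the bound carries across transfers unchanged; since \textsc{Transfer} funnels each imported pair through \textsc{Insert}, matching expirations coalesce on the recipient side and the distinctness invariant is preserved, making the counting argument applicable uniformly to sender and recipient alike.
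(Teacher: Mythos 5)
Your proof is correct and follows essentially the same route as the paper's: confine all live expiration timestamps to a window of bucket boundaries between $t_{\text{now}}$ and $\lceil (t_{\text{now}}+T)/w\rceil\, w$, use coalescing for distinctness, and count at most $k+1$ boundaries via ceiling arithmetic. Your version is somewhat more careful than the paper's in that it establishes the window invariant by induction over the operation sequence and explicitly covers the \textsc{Transfer} case, where expirations are imported rather than freshly bucketed---a case the paper's proof leaves implicit.
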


\begin{proof}
Each record in $\mathcal{B}$ has a distinct bucketed expiration timestamp at some bucket boundary $i \times w$ for $i \in \mathbb{Z}^+$.

At any time $t$ after pruning, all remaining records satisfy $e > t$. The minimum possible expiration is the smallest bucket boundary greater than $t$, which is $\lceil t/w \rceil \times w$.

A new deposit at time $t$ creates a record with expiration $e = \lceil (t+T)/w \rceil \times w$.

The number of distinct bucket boundaries in the range $[\lceil t/w \rceil \times w, \lceil (t+T)/w \rceil \times w]$ is:
\[
\left\lceil \frac{t+T}{w} \right\rceil - \left\lceil \frac{t}{w} \right\rceil + 1
\]

Using the ceiling function property $\lceil a \rceil - \lceil b \rceil \leq \lceil a - b \rceil$ for $a > b$:
\[
\left\lceil \frac{t+T}{w} \right\rceil - \left\lceil \frac{t}{w} \right\rceil \leq \left\lceil \frac{T}{w} \right\rceil = \left\lceil \frac{T}{\lceil T/k \rceil} \right\rceil \leq k
\]

Therefore, the total number of buckets is at most $k + 1$, and since each bucket corresponds to at most one record, $|\mathcal{B}| \leq k + 1$. \qed
\end{proof}

\subsection{Theorem 2: TTL Guarantee}

\begin{theorem}[TTL Guarantee]
\label{thm:ttl}
For any deposit at time $t$ with configured TTL $T$, the bucketed expiration $e$ satisfies $e \geq t + T$.
\end{theorem}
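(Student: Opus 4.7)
The plan is to reduce the theorem to a one-line application of the defining property of the ceiling function, $\lceil x \rceil \geq x$. Starting from the definition in Equation~\ref{eq:bucketed-expiry}, we have $e = \lceil (t+T)/w \rceil \times w$, and the goal is to show that this quantity is at least $t + T$.

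First I would verify that $w > 0$, which holds because $T > 0$ and $k \in \mathbb{Z}^+$ together give $w = \lceil T/k \rceil \geq 1$; this is needed so that multiplying by $w$ preserves the direction of inequalities. Then I would apply the ceiling inequality $\lceil (t+T)/w \rceil \geq (t+T)/w$ and multiply both sides by $w$ to conclude $e \geq t + T$, which is exactly the claim.

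There is no real obstacle here---the theorem is essentially a definitional unpacking, and the only thing to watch is that the ceiling operation is applied to a nonnegative quantity and $w$ is strictly positive. If a slightly fuller exposition is desired, I would also record the complementary upper bound $e < t + T + w$ (using $\lceil x \rceil < x + 1$), which shows that the extra lifetime granted by bucketing is strictly less than one bucket width. This matches the \emph{at most $w - 1$ seconds} remark following Equation~\ref{eq:bucketed-expiry} in Section~\ref{sec:algorithm} and demonstrates that requirement R2 is met with minimal overshoot, quantifying the trade-off between expiration precision and the storage bound established in Theorem~\ref{thm:storage}.
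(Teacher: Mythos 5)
Your proof is correct and takes essentially the same route as the paper's: both reduce the claim to the ceiling inequality $\lceil x \rceil \geq x$ applied to $(t+T)/w$, followed by multiplication by $w$. Your explicit check that $w \geq 1$ (so the multiplication preserves the inequality) and the optional upper bound $e < t + T + w$ are small refinements the paper omits, but they do not change the argument.
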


\begin{proof}
The bucketed expiration is computed as:
\[
e = \left\lceil \frac{t + T}{\text{bucketSize}} \right\rceil \times \text{bucketSize}
\]

By the ceiling function property:
\[
\left\lceil x \right\rceil \geq x \quad \forall x \in \mathbb{R}
\]

Therefore:
\[
\left\lceil \frac{t + T}{\text{bucketSize}} \right\rceil \geq \frac{t + T}{\text{bucketSize}}
\]

Multiplying both sides by $\text{bucketSize}$:
\[
e = \left\lceil \frac{t + T}{\text{bucketSize}} \right\rceil \times \text{bucketSize} \geq t + T
\]

Thus, the bucketed expiration is always at least the exact expiration, and tokens never expire before their configured TTL. \qed
\end{proof}

\subsection{Theorem 3: DoS Resistance}

\begin{theorem}[DoS Resistance]
\label{thm:dos}
No adversary can increase a victim's operation cost beyond $\bigO(k^2)$, regardless of the number of adversarial deposits.
\end{theorem}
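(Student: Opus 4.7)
The plan is to reduce the DoS resistance claim to the storage bound established in Theorem~\ref{thm:storage}: since $|\mathcal{B}| \leq k+1$ at all times no matter how the adversary behaves, it suffices to bound the cost of each core operation as a function of $|\mathcal{B}|$. First I would state (and invoke) Theorem~\ref{thm:storage} to fix the invariant $|\mathcal{B}| \leq k+1$ both before and after each operation, regardless of how many deposits the adversary has injected. This is the lever that converts an adversarial input of unbounded size into a bounded working set.

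Next I would walk through each of the four operations in Algorithm~\ref{alg:operations} and bound their work. \textsc{Prune} performs a single linear scan of $\mathcal{B}$, so it costs $O(|\mathcal{B}|) = O(k)$. \textsc{Insert} calls \textsc{Prune} and then does a single scan to either coalesce or find the sorted insertion point (with a possible $O(|\mathcal{B}|)$ shift), so it is $O(k)$. \textsc{Consume} performs one scan of $\mathcal{B}_s$, decrementing records until the requested amount is exhausted, also $O(k)$. The expensive operation is \textsc{Transfer}: it first calls \textsc{Consume} on the sender ($O(k)$), producing a list $C$ of at most $|\mathcal{B}_s| \leq k+1$ pairs; it then calls \textsc{Insert} on the recipient once per pair, and each such call costs $O(|\mathcal{B}_r|) = O(k)$. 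The total is $O(k) + (k+1) \cdot O(k) = O(k^2)$, matching the claimed worst-case bound.

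To conclude, I would observe that the adversary's only levers (per the threat model) are the number and timing of deposits, neither of which can push $|\mathcal{B}|$ above $k+1$ by Theorem~\ref{thm:storage}. Consequently, every operation the victim subsequently performs is bounded by the per-operation costs above, yielding a uniform $O(k^2)$ worst-case cost irrespective of adversarial history. I would also briefly note where the amortized $O(k)$ bound referenced in the abstract comes from: across a sequence of \textsc{Transfer} operations, each newly created recipient record can be charged to a future coalescing or pruning event, so the amortized number of \textsc{Insert} calls per transfer is $O(1)$ rather than $O(k)$; but a full amortized argument is not needed for this theorem.

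The main obstacle I anticipate is being careful about the \textsc{Transfer} step: one must argue that $|C| \leq k+1$ (bounded by the sender's record count at the moment of consumption, itself bounded by Theorem~\ref{thm:storage}) and that each subsequent \textsc{Insert} on the recipient still sees $|\mathcal{B}_r| \leq k+1$, since a newly inserted record may push the recipient temporarily toward the bound before further coalescing. Handling this requires noting that Theorem~\ref{thm:storage}'s invariant is preserved after each individual \textsc{Insert}, so the $O(k)$ per-insert cost remains valid throughout the inner loop. Once this point is nailed down, the rest of the argument is a straightforward summation.
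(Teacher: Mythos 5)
Your proposal is correct and follows essentially the same route as the paper's proof: invoke the storage bound of Theorem~\ref{thm:storage} to cap $|\mathcal{B}|$ at $k+1$ independent of adversarial deposits, then bound each operation's cost per record count, with \textsc{Transfer} incurring $O(k)$ insertions of $O(k)$ each for the $O(k^2)$ worst case. Your additional care about $|C| \leq k+1$ and the invariant holding across the inner \textsc{Insert} loop is a slight refinement the paper leaves implicit, but the argument is the same.
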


\begin{proof}
We show that each operation has worst-case complexity bounded by $k$:

\textbf{Insert:} Searches for matching expiration ($\bigO(k)$), then either updates existing record ($\bigO(1)$) or inserts maintaining sorted order ($\bigO(k)$ for shifting). Total: $\bigO(k)$.

\textbf{Consume:} Iterates through at most $k$ records. Total: $\bigO(k)$.

\textbf{Transfer:} Consumes from sender ($\bigO(k)$) and inserts into recipient. In the worst case where all $k$ records are transferred with distinct expirations, this requires $k$ insertions of $\bigO(k)$ each. Total: $\bigO(k^2)$ worst case.

\textbf{Prune:} Single pass through at most $k$ records. Total: $\bigO(k)$.

Since $|\mathcal{B}| \leq k+1$ by Theorem~\ref{thm:storage}, the adversary cannot increase $|\mathcal{B}|$ beyond $k+1$ regardless of deposit frequency.
Therefore, operation costs are bounded by $\bigO(k^2)$ (for transfers) and $\bigO(k)$ (for other operations), independent of adversarial deposits. \qed
\end{proof}

\subsection{Complexity Summary}

\begin{table}[h]
\centering
\caption{Operation complexity with $k$ maximum records.}
\label{tab:complexity}
\begin{tabular}{lcc}
\toprule
\textbf{Operation} & \textbf{Time} & \textbf{Space} \\
\midrule
Insert & $\bigO(k)$ & $\bigO(1)$ \\
Consume & $\bigO(k)$ & $\bigO(1)$ \\
Transfer & $\bigO(k^2)$ & $\bigO(1)$ \\
Prune & $\bigO(k)$ & $\bigO(1)$ \\
Balance Query & $\bigO(k)$ & $\bigO(1)$ \\
\midrule
Storage per account & --- & $\bigO(k)$ \\
\bottomrule
\end{tabular}
\end{table}

\subsection{Trade-off Analysis}

The precision-storage trade-off is characterized by:

\[
\text{maxExtraLifetime} = \text{bucketSize} - 1 = \left\lceil \frac{T}{k} \right\rceil - 1
\]

For example, with $T = 30$ days and $k = 100$:
\[
\text{bucketSize} = \frac{30 \times 86400}{100} = 25920 \text{ seconds} \approx 7.2 \text{ hours}
\]

Tokens may live up to 7.2 hours beyond their configured TTL---acceptable for most applications while guaranteeing bounded storage.

\section{Implementation \& Validation}
\label{sec:implementation}

\subsection{Experimental Setup}

We measured gas costs using Foundry's~\cite{foundry} gas tracing functionality with the following configuration:
\begin{itemize}
    \item Solidity compiler: v0.8.24
    \item Optimizer: enabled with 200 runs
    \item Parameters: $k = 100$ maximum records, TTL = 30 days
    \item Bucket width: $\lceil \text{TTL} / k \rceil = 25{,}920$ seconds ($\approx$7.2 hours)
\end{itemize}

Gas measurements were obtained via \texttt{forge test -vvvv} tracing, which reports the exact gas consumed by each contract call excluding test harness overhead. All the benchmarks are reproducible via the test suite. \footnote{\url{https://github.com/glanzz/tbbr}} 

\subsection{Gas Costs}

Table~\ref{tab:gas} shows costs for typical operations with $k = 100$ and a 30-day TTL.

\begin{table}[h]
\centering
\caption{Measured gas costs with $k = 100$ maximum records.}
\label{tab:gas}
\begin{tabular}{lrl}
\toprule
\textbf{Operation} & \textbf{Gas Cost} & \textbf{Notes} \\
\midrule
Mint (new record) & 95,407 & Creates new balance record \\
Mint (coalesce) & 4,931 & Adds to existing bucket \\
Transfer & 94,853 & FIFO consume + insert \\
Burn & 4,511 & FIFO consumption \\
Balance query & 2,265 & View function, no state change \\
\bottomrule
\end{tabular}
\end{table}

The significant difference between ``new record'' and ``coalesce'' operations (95K vs 5K gas) demonstrates the efficiency of time-bucketing: deposits within the same bucket coalesce into a single storage write rather than creating new records.

Table~\ref{tab:worst-case} shows worst-case gas costs when an account has accumulated $k$ records (one per bucket) and must consume from all of them.

\begin{table}[h]
\centering
\caption{Worst-case gas costs with $k = 100$ records.}
\label{tab:worst-case}
\begin{tabular}{lrl}
\toprule
\textbf{Operation} & \textbf{Gas Cost} & \textbf{Notes} \\
\midrule
Burn (all records) & 335,499 & FIFO from 100 records \\
Transfer (all records) & 9,994,917 & Consume + reinsert all \\
\bottomrule
\end{tabular}
\end{table}

Even in the worst case, the transfer operation ($\sim$10M gas) remains well within Ethereum's 30M block gas limit, ensuring operations are always executable.

\subsection{DoS Resistance}

We validated the DoS resistance properties experimentally.
An adversary making 500 small deposits (each in a different transaction) to a single address would create 500 records in an unbounded approach.
With our time-bucketing mechanism ($k = 100$), these deposits coalesce into at most $k$ records.

After the simulated attack, subsequent operations remain bounded:
\begin{itemize}
    \item Record count: bounded at $k$ (not 500)
    \item Burn operation: $\sim$335K gas (not unbounded)
    \item No block gas limit exhaustion
\end{itemize}

The gas cost scales with $\bigO(k^2)$ in the worst case (when records must be coalesced), which is acceptable since $k$ is a constant configuration parameter chosen by the contract deployer.

\subsection{Comparison with Unbounded Approach}

For comparison, an unbounded array approach would have:
\begin{itemize}
    \item Insert: $\bigO(1)$ constant cost per deposit
    \item Consume/Transfer: $\bigO(n)$ where $n$ is the number of deposits
\end{itemize}

After 500 deposits, the unbounded approach requires iterating over all 500 records per operation.
Our bounded approach maintains constant $\bigO(k)$ iteration regardless of deposit count, with worst-case burn gas of approximately 335K and transfer gas of approximately 10M for $k = 100$.

\section{Discussion}
\label{sec:discussion}

\subsection{Trade-offs}

\subsubsection{Precision vs. Storage.}
The parameter $k$ controls the trade-off between expiration precision and storage consumption.
Larger $k$ provides finer-grained expiration (smaller buckets) at the cost of more storage slots and higher worst-case operation costs.
For a 30-day TTL, $k = 100$ provides 7.2-hour bucket granularity (1\% precision loss) with worst-case transfer costs of 10M gas (33\% of Ethereum's block limit), while $k = 50$ yields 14.4-hour buckets with 2.5M gas transfers, and $k = 200$ achieves 3.6-hour precision but requires 40M gas (exceeding single-block execution on mainnet).
For most applications, $k = 100$ balances precision, storage, and execution costs effectively, particularly on Layer 2 networks with higher gas limits.

\subsubsection{Worst-case Insert Cost.}
Insertion requires maintaining sorted order, incurring $\bigO(k)$ worst-case cost for shifting elements.
In practice, most insertions either coalesce with existing records (when deposits occur within the same bucket window) or append to the end (for monotonically increasing time), resulting in $\bigO(1)$ amortized cost.

\subsection{Generalization Beyond Blockchain}

While we motivated our design with EVM constraints, time-bucketed balance records apply to any resource-constrained system tracking expiring fungible resources:

\begin{itemize}
    \item \textbf{Database systems:} Row-level TTL with bounded index size per entity.
    \item \textbf{Caching systems:} Bounded metadata for cache entries with heterogeneous expiration.
    \item \textbf{IoT devices:} Resource tracking on memory-constrained embedded systems.
    \item \textbf{Rate limiting:} Sliding window counters with bounded state.
\end{itemize}

The core algorithm is platform-agnostic; only the storage and computation cost model differs.

\subsection{Limitations}

\begin{itemize}
    \item \textbf{Precision loss:} Tokens may live up to $\lceil T/k \rceil - 1$ seconds beyond their configured TTL. Applications requiring exact expiration should use larger $k$ or alternative approaches.

    \item \textbf{Fixed TTL:} The system requires a single, pre-configured TTL value for all deposits, as the bucket width $w = \lceil T/k \rceil$ depends on a constant TTL. Applications requiring heterogeneous TTLs must either deploy separate instances per TTL tier (multiplying storage by the number of tiers) or configure bucket width based on the maximum TTL (reducing coalescing efficiency for shorter-lived tokens).

    \item \textbf{Sorted insertion:} The $\bigO(k)$ insertion cost, while bounded, may be significant for very large $k$. Alternative data structures (e.g., skip lists) could reduce this to $\bigO(\log k)$ at the cost of implementation complexity.

    \item \textbf{Single-resource tracking:} Our approach tracks one resource type per data structure instance. Multi-resource scenarios require separate instances per resource, multiplying storage by the number of resource types.
\end{itemize}

\subsection{Future Work}

\begin{itemize}
    \item \textbf{Formal verification:} Machine-checked proofs in Coq or Dafny to increase confidence in correctness.

    \item \textbf{Cross-chain extension:} Extending expiration semantics across multiple blockchain networks while maintaining consistency guarantees.

    \item \textbf{Dynamic $k$:} Adaptive bucket sizing based on observed deposit patterns to optimize the precision-storage trade-off dynamically.
\end{itemize}

\section{Related Work}
\label{sec:related}

\subsection{Token Standards with Time Semantics}

ERC-4907~\cite{erc4907} introduces rental semantics for NFTs with automatic expiration of the ``user'' role.
ERC-7858~\cite{erc7858} extends ERC-721 with explicit expiration timestamps for NFTs and soulbound tokens.
Both standards address \emph{non-fungible} tokens where each token has a single expiration.
Our work addresses \emph{fungible} tokens where individual units within a balance may have different expiration times.

ERC-5192~\cite{erc5192} defines minimal soulbound (non-transferable) NFTs.
Our approach can be combined with soulbound semantics for non-transferable expiring credentials.

\subsection{Streaming Payment Protocols}

Sablier~\cite{sablier} and Superfluid~\cite{superfluid} enable continuous token streaming where balances increase linearly over time.
These protocols solve the inverse problem: tokens \emph{unlock} gradually rather than \emph{expire}.
Streaming requires continuous state updates (or lazy evaluation), while our discrete expiration model requires no state changes until consumption.

\subsection{Token Vesting}

Vesting contracts~\cite{openzeppelin} lock tokens with scheduled release over time.
Vesting tracks \emph{future availability}, while we track \emph{future expiration}---tokens exist immediately and become invalid later.
Vesting typically serves single beneficiaries per contract instance, while our approach efficiently handles multi-user scenarios within a single contract.

\subsection{DoS Mitigation in Smart Contracts}

Denial-of-service vulnerabilities in smart contracts have received significant attention~\cite{attacks,dos-ethereum,multilayer-security}.
Grech \etal~\cite{madmax} analyze out-of-gas vulnerabilities from unbounded loops, while recent work~\cite{gas-patterns} detects gas-expensive patterns automatically.
Common mitigations include pull-over-push patterns~\cite{consensys} and fixed array sizes.
Our contribution is a \emph{semantically meaningful} bound: time-bucketing limits storage to $k$ records while preserving TTL guarantees, rather than arbitrarily capping array size and potentially violating application invariants.

\subsection{DeFi Security}

The DeFi ecosystem has suffered significant losses from smart contract vulnerabilities~\cite{defi-survey,defi-security-tools}.
Flash loan attacks~\cite{flashloans,flashsyn} exploit atomicity guarantees to manipulate protocol state within single transactions.
While our work does not directly address these attack vectors, bounded storage provides defense-in-depth against adversarial state manipulation.

\subsection{Bounded Data Structures}

Circular buffers and LRU caches provide bounded storage but evict oldest entries regardless of semantic validity.
For TTL tracking, eviction violates the minimum lifetime guarantee.
Our time-bucketing preserves TTL semantics by coalescing entries rather than evicting them.

\section{Conclusion}
\label{sec:conclusion}

We presented time-bucketed balance records, a data structure for tracking fungible resources with heterogeneous time-to-live values under bounded storage constraints.
Our approach discretizes time into $k$ buckets, coalescing deposits with similar expirations to bound storage while guaranteeing that resources never expire before their configured TTL.

We proved three key properties:
\begin{enumerate}
    \item Storage is bounded by $k+1$ records regardless of deposit frequency.
    \item Actual expiration time is always at least the configured TTL.
    \item Adversaries cannot increase operation costs beyond $\bigO(k^2)$.
\end{enumerate}

Our Solidity implementation demonstrates practical efficiency, with gas costs bounded by the configurable parameter $k$ rather than growing with deposit count.
The approach generalizes beyond blockchain to any resource-constrained system managing expiring fungible resources.

Future work includes formal verification via proof assistants, cross-chain extensions, and adaptive bucket sizing for dynamic optimization of the precision-storage trade-off.

\bibliographystyle{splncs04}
\bibliography{references}

\end{document}